\newtheorem{thm}{Theorem}
\def\To{\Rightarrow}
\def\TTo{\Longrightarrow}
\def\ie{i.e.}
\title{Matrix Insertion-Deletion Systems}
\author{Ion Petre$^{1}$\and Sergey Verlan$^2$}
\date{
$^1$Department of IT, \AA bo Akademi University, Turku 20520 Finland
\\
\texttt{ipetre@abo.fi}
\\
$^2$Laboratoire d'Algorithmique, Complexit\'e et Logique,\\
D\'epartement Informatique,\\
Universit\'e Paris Est,\\
 61, av. G\'en\'eral de Gaulle, 94010 Cr\'eteil,
France
\\
\texttt{verlan@univ-paris12.fr}
}
\newcommand{\prule}[3]{\ensuremath{(#1,#2,#3)}}
\newcommand{\ins}[3]{\ensuremath{\prule{#1}{#2}{#3}_{ins}}}
\newcommand{\del}[3]{\ensuremath{\prule{#1}{#2}{#3}_{del}}}
\newcommand{\cins}[2]{\ins{#1}{#2}{\lambda}}
\newcommand{\cdel}[2]{\del{#1}{#2}{\lambda}}
\newcommand{\fins}[1]{\ins{\lambda}{#1}{\lambda}}
\newcommand{\fdel}[1]{\del{\lambda}{#1}{\lambda}}
\newcommand{\mrule}[2]{\ensuremath{\left[#1,\ #2\right]}}
\newcommand{\mruleI}[3]{\mrule{\fins{#1}}{\cdel{#2}{#3}}}
\begin{document}

\maketitle

\begin{abstract}
In this article, we consider for the first time the operations of insertion and
deletion working in a matrix controlled manner. We show that, similarly as in
the case of context-free productions, the computational power is strictly
increased when using a matrix control: computational completeness can be
obtained by systems with insertion or deletion rules involving at most two
symbols in a contextual or in a context-free manner and using only binary
matrices.
\end{abstract}

\section{Introduction}

The operations of insertion and deletion were first considered with a
linguistic motivation~\cite{Marcus,Galiuk,Kluwer}. Another inspiration for
these operations comes from the fact that the insertion operation and its
iterated variants are generalized versions of Kleene's operations of
concatenation and closure~\cite{Kleene56}, while the deletion operation
generalizes the quotient operation. A study of properties of the corresponding
operations may be found in~\cite{Haussler82,Haussler83,Kari}. Insertion and
deletion also have interesting biological motivations, e.g., they correspond to
a mismatched annealing of DNA sequences; these operations are also present in
the evolution processes in the form of point mutations as well as in RNA
editing, see the discussions in~\cite{Beene,BBD07,Smith} and~\cite{dna}. These
biological motivations of insertion-deletion operations led to their study in
the framework of molecular computing, see, for example,
\cite{Daley,cross,dna,TY}.

In general, an insertion operation means adding a substring to a given string
in a specified (left and right) context, while a deletion operation means
removing a substring of a given string from a specified (left and right)
context. A finite set of insertion-deletion rules, together with a set of
axioms provide a language generating device: starting from the set of initial
strings and iterating insertion-deletion operations as defined by the given
rules, one obtains a language.

Even in their basic variants, insertion-deletion systems are able to
characterize the recursively enumerable languages. Moreover, as it was shown in
\cite{cfinsdel}, the context dependency may be replaced by insertion and
deletion of strings of sufficient length, in a context-free manner. If the
length is not sufficient (less or equal to two) then such systems are not able
to generate more than the recursive languages and a characterization of them
was shown in~\cite{SV2-2}.

Similar investigations were continued in \cite{MRV07,KRV08,KRV08c} on
insertion-deletion systems with one-sided contexts, i.e., where the context
dependency is present only from the left or only from the right side of all
insertion and deletion rules. The papers cited above give several computational
completeness results depending on the size of insertion and deletion rules. We
recall the interesting fact that some combinations are not leading to
computational completeness, i.e., there are recursively enumerable languages
that cannot be generated by such devices.

Like in the case of context-free rewriting, it is possible to consider a
graph-controlled variant of insertion-deletion systems. Thus the rules cannot
be applied at any time, as their applicability depends on the current
``state'', changed by a rule application. Such a formalization is rather
similar to the definition of insertion-deletion P systems~\cite{membr}, however
it is even simpler and more natural. The article~\cite{FKRV10}  focuses on
one-sided graph-controlled insertion-deletion systems where at most two symbols
may be present in the description of insertion and deletion rules. This
correspond to systems of size $(1,1,0;1,1,0)$, $(1,1,0;1,0,1)$,
$(1,1,0;2,0,0)$, and $(2,0,0;1,1,0)$, where the first three numbers represent
the maximal size of the inserted string and the maximal size of the left and
right contexts, while the last three numbers represent the same information,
but for deletion rules. It is known that such systems are not computationally
complete~\cite{KRV11}, while the corresponding P systems variants and
graph-controlled variants are computationally complete.

In this article we introduce a new type of control, similar to the one used in
matrix grammars. More precisely, insertion and deletion rules are grouped in
sequences, called matrices, and either the whole sequence is applied
consecutively, or no rule is applied. We show that in the case of such control
the computational power of systems of size $(1,1,0;2,0,0)$ and $(2,0,0;1,1,0$
is strictly increasing. Moreover, we show that binary matrices suffice to
achieve this result, hence we obtain a similar characterization like in the
case of the binary normal form for matrix grammars.

\section{Definitions}\label{sec:def}

We do not present the usual definitions concerning standard concepts of the
theory of formal languages and we only refer to textbooks such as
\cite{handbook} for more details.

The empty string is denoted by $\lambda $.

In the following, we will use special variants of the \emph{Geffert} normal
form for type-0 grammars (see~\cite{Geffert91} for more details).

A grammar $G=\left( N,T,P,S\right) $ is said to be in \emph{Geffert normal
form}~\cite{Geffert91} if $N=\{S,A,B,C,D\}$ and $P$ only contains context-free
rules of the forms $S\to uSv$ with $u\in \{A,C\}^*$ and $v\in \{B,D\}^*$ as
well as $S\to x$ with $x\in (T\cup \{A,B,C,D\})^*$ and two (non-context-free)
erasing rules $AB\to \lambda $ and $CD\to \lambda $.

We remark that we can easily transform the linear rules from the Geffert normal
form into a set of left-linear and right-linear rules (by increasing the number
of non-terminal symbols, e.g., see \cite{membr}). More precisely, we say that a
grammar $G=\left( N,T,P,S\right) $ with $N=N'\cup N''$, $S,S'\in N'$, and
$N''=\{A,B,C,D\}$, is in the \emph{special Geffert normal form} if, besides the
two erasing rules $AB\to \lambda $ and $CD\to \lambda $, it only has
context-free rules of the following forms:

\begin{align*}
& X\to bY,\quad X,Y\in N',b\in T\cup N'', \\
& X\to Yb,\quad X,Y\in N',b\in T\cup N'', \\
& S'\to \lambda .
\end{align*}

Moreover, we may even assume that, except for the rules of the forms $%
X\to Sb$ and $X\to S'b$, for the first two types of
rules it holds that the right-hand side is unique, i.e., for any two rules $%
X\to w$ and $U\to w$ in $P$ we have $U=X$.

The computation in a grammar in the special Geffert normal form is done in two
stages. During the first stage, only context-free rules are applied. During the
second stage, only the erasing rules $AB\to \lambda $ and $CD\to \lambda $ are
applied. These two erasing rules are not applicable during the first stage as
long as the left and the right part of the current string are still separated
by $S$ (or $S'$) as all the symbols $A$ and $C$ are generated on the left side
of these middle symbols and the corresponding symbols $B$ and $D$ are generated
on the right side. The transition between stages is done by the rule $S'\to
\lambda $. We remark that all these features of a grammar in the special
Geffert normal form are immediate consequences of the proofs given
in~\cite{Geffert91}.

\subsection{Insertion-deletion systems}

An \textit{insertion-deletion system} is a construct $ID=(V,T,A,I,D)$, where
$V$ is an alphabet; $T\subseteq V$ is the set of \textit{terminal} symbols (in
contrast, those of $V-T$ are called \textit{non-terminal} symbols); $A$
is a finite language over $V$, the strings in $A$ are the \textit{axioms}; $%
I,D$ are finite sets of triples of the form $(u,\alpha ,v)$, where $u$, $%
\alpha $ ($\alpha \neq \lambda $), and $v$ are strings over $V$. The triples in
$I$ are \textit{insertion rules}, and those in $D$ are \textit{deletion
rules}. An insertion rule $(u,\alpha ,v)\in I$ indicates that the string $%
\alpha $ can be inserted between $u$ and $v$, while a deletion rule $%
(u,\alpha ,v)\in D$ indicates that $\alpha $ can be removed from between the
context $u$ and $v$. Stated in another way, $(u,\alpha ,v)\in I$ corresponds to
the rewriting rule $uv\to u\alpha v$, and $(u,\alpha ,v)\in D$ corresponds to
the rewriting rule $u\alpha v\to uv$. By $\To _{ins}$ we denote the relation
defined by the insertion
rules (formally, $x\To_{ins}y$ if and only if $%
x=x_{1}uvx_{2},y=x_{1}u\alpha vx_{2}$, for some $(u,\alpha ,v)\in I$ and
$x_{1},x_{2}\in V^*$), and by $\To_{del}$ the relation defined by the deletion
rules (formally, $x\To_{del}y$ if and only if $x=x_{1}u\alpha
vx_{2},y=x_{1}uvx_{2}$, for some $(u,\alpha ,v)\in D$ and $x_{1},x_{2}\in
V^*$). By $\To $ we refer to any of the relations $\To_{ins},\To_{del}$, and by
$\To^*$ we denote the reflexive and transitive closure of $\To$.

The language generated by $ID$ is defined by
\begin{equation*}
L(ID)=\{w\in T^*\mid x\To ^*w\mathrm{\ for\ some\ }
x\in A\}.
\end{equation*}

The complexity of an insertion-deletion system $ID=(V,T,A,I,D)$ is described by
the vector $(n,m,m';p,q,q')$ called \emph{size}, where \vspace{-2mm}
\begin{eqnarray*}
n=\max\{|\alpha|\mid (u,\alpha,v)\in I\}, & & p=\max\{|\alpha|\mid
(u,\alpha,v)\in D\}, \\
m=\max\{|u|\mid (u,\alpha,v)\in I\}, & & q=\max\{|u|\mid (u,\alpha,v)\in D\},
\\
m'=\max\{|v|\mid (u,\alpha,v)\in I\}, & & q'=\max\{|v|\mid
(u,\alpha,v)\in D\}.
\end{eqnarray*}

By $INS_{n}^{m,m'}DEL_{p}^{q,q'}$ we denote the families of insertion-deletion
systems having the size $(n,m,m';p,q,q')$.

If one of the parameters $n,m,m',p,q,q'$ is not specified, then instead we
write the symbol~$\ast $. In particular, $INS_*^{0,0}DEL_*^{0,0}$ denotes the
family of languages generated by \emph{context-free insertion-deletion
systems}. If one of numbers from the pairs $m $, $m'$ and/or $q$, $q'$ is equal
to zero (while the other one is not), then we say that the corresponding
families have a one-sided context. Finally we remark that the rules from $I$
and $D$ can be put together into one set of rules $R$ by writing $\left(
u,\alpha ,v\right) _{ins}$ for $\left( u,\alpha ,v\right) \in I$ and $\left(
u,\alpha ,v\right) _{del}$ for $\left( u,\alpha ,v\right) \in D$.

\subsection{Matrix insertion-deletion systems}

Like context-free grammars, insertion-deletion systems may be extended by
adding some additional controls. We discuss here the adaptation of the idea of
matrix grammars for insertion-deletion systems.

A \emph{matrix insertion-deletion system} is a construct%
\begin{equation*}
\gamma =(V,T,A,M)\mathrm{\ where}
\end{equation*}

\begin{itemize}
\item $V$ is a finite alphabet,

\item $T\subseteq V$ is the \emph{terminal alphabet},

\item $A\subseteq V^*$ is a finite set of \emph{axioms},

\item $M=r_1,\dots, r_n$ is a finite set of sequences of rules, called
    \emph{matrices}, of the form $r_i:[r_{i1},\dots{}r_{ik}]$ where
    $r_{ij}$, $1\le i\le n$, $1\le j\le k$ is an insertion or deletion rule
    over $V$.
\end{itemize}

The sentential form (also called configuration) of $\gamma$ is a string $w\in
V^*$. A transition $w\TTo_{r_i}w'$, for $1\le i\le n$, is performed if there
exist $w_1,\dots,w_k\in V^*$ such that
$w\To_{r_{i1}}w_1\To_{r_{i2}}\dots{}\To_{r_{ik}}w_k$ and $w_k=w'$.

The language generated by $\gamma$ is defined by
\begin{equation*}
L(\gamma)=\{w\in T^*\mid x\TTo ^*w\mathrm{\ for\ some\ }
x\in A\}.
\end{equation*}

By $Mat_kINS_{n}^{m,m'}DEL_{p}^{q,q'}$, $k>1$, we denote the families of matrix
insertion-deletion systems having matrices with at most $k$ rules and insertion
and deletion rules of size $(n,m,m';p,q,q')$.

\section{Computational completeness}

For all the variants of insertion and deletion rules considered in this
section, we know that the basic variants without using matrix control cannot
achieve computational completeness (see \cite{KRV11}, \cite{MRV07}). The
computational completeness results from this section are based on simulations
of derivations of a grammar in the special Geffert normal form. These
simulations associate a group of insertion and deletion rules to each of the
right- or left-linear rules $X\to bY$ and $X\to Yb$. The same holds for
(non-context-free) erasing rules $AB\to \lambda $ and $CD\to \lambda $. We
remark that during the derivation of a grammar in the special Geffert normal
form, any sentential form contains at most one non-terminal symbol from $N'$.

We start with the following affirmation: if the size of the matrices is
sufficiently large, then corresponding systems are computationally complete.
This is quite obvious for matrices of size 3.

\begin{thm}
$Mat_3INS_1^{1,0}DEL_2^{0,0} = RE.$
\end{thm}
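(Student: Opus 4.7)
The plan is to simulate a grammar $G = (N, T, P, S)$ in special Geffert normal form by a matrix insertion-deletion system $\gamma$ of size $(1,1,0;2,0,0)$ whose matrices have length at most three. Since SGNF characterizes $RE$, such a simulation yields $RE \subseteq Mat_3INS_1^{1,0}DEL_2^{0,0}$; the reverse inclusion is routine, as insertion-deletion systems (with or without matrix control) are effectively simulable on a Turing machine.

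The construction assigns one matrix to each SGNF production. A right-linear rule $X \to bY$ becomes the matrix $\left[\cins{X}{Y},\, \cins{X}{b},\, \fdel{X}\right]$, whose sequential effect on a sentential form $\alpha X \beta$ with $X \in N'$ unique is $\alpha X \beta \To \alpha X Y \beta \To \alpha X b Y \beta \To \alpha b Y \beta$. Swapping the order of the two insertions yields $\left[\cins{X}{b},\, \cins{X}{Y},\, \fdel{X}\right]$ for a left-linear rule $X \to Yb$, giving $\alpha X \beta \To \alpha X b \beta \To \alpha X Y b \beta \To \alpha Y b \beta$. The transition $S' \to \lambda$ is implemented by the singleton matrix $\left[\fdel{S'}\right]$, and the erasing rules $AB \to \lambda$ and $CD \to \lambda$ by the singleton matrices $\left[\fdel{AB}\right]$ and $\left[\fdel{CD}\right]$ respectively. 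All five matrix schemes respect the size bound $(1,1,0;2,0,0)$ and have length at most three; the axiom set of $\gamma$ is $\{S\}$.

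Soundness, i.e.\ $L(G) \subseteq L(\gamma)$, follows because the uniqueness of the non-terminal from $N'$ in every sentential form of $G$ forces each single-symbol insertion of a context-free matrix to attach at the intended position. The delicate point, on which I would spend most effort, is completeness $L(\gamma) \subseteq L(G)$. The obstacle is that $\fdel{AB}$ and $\fdel{CD}$ are fully context-free and could, a priori, delete any adjacent $AB$ or $CD$ occurring anywhere in the string. I would address this by proving that the SGNF structural invariant --- symbols $A, C$ occur only to the left of the middle non-terminal from $N'$ and $B, D$ only to its right --- is preserved by every completed matrix transition (the intermediate micro-steps inside a matrix are invisible from the outside, so one only has to verify the invariant between transitions). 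Under this invariant no $AB$ or $CD$ pair can occur during the first stage, and during the second stage such a pair can occur only at the boundary created once $\fdel{S'}$ has fired, matching exactly the behavior of the corresponding SGNF erasing rule. This yields a faithful projection of $\gamma$-derivations onto SGNF-derivations of $G$, and hence the desired equality of languages.
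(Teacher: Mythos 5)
Your construction is essentially identical to the paper's: the paper uses the single matrix $[\cins{A}{y},\ \cins{A}{x},\ \fdel{A}]$ for a rule $A\to xy$ (of which your two right-/left-linear variants are the two instantiations), and erases $AB$, $CD$, $S'$ directly with context-free deletions, padded with dummy rules only for uniformity. Your correctness argument, including the appeal to the uniqueness of the $N'$ non-terminal and to the Geffert-normal-form invariant that keeps $AB$ and $CD$ pairs from forming before the second stage, matches the paper's (terser) justification, so the proposal is correct and takes the same route.
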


\begin{proof}
The proof is based on a simulation of a type-0 grammar in the Geffert normal
form (as presented in Section~\ref{sec:def}). Let $G=(V,T,S,P)$ be such a
grammar. We construct the system $\gamma=(V,T,\{S\},M)$ as follows.


For every rule $r:A\to xy\in P$, $x,y\in V$ we add to $M$ the matrix\\
$r:[\cins{A}{y},\cins{A}{x},\fdel{A}]$.


For rules $AB\to\lambda\in P$, $CD\to\lambda\in P$ and $S'\to\lambda$ we add to
$M$ following matrices:\\
$AB: [\fdel{AB},\fins{\lambda},\fins{\lambda}]$,\\
$CD:[\fdel{CD},\fins{\lambda},\fins{\lambda}]$ and\\
$S':[\fdel{S'},\fins{\lambda},\fins{\lambda}]$.

It is clear that $L(\gamma)=L(G)$. Indeed, rules of type $A\to xy$ are
simulated by consecutively inserting $y$ and $x$ after $A$ and finally deleting
$A$. The rules $AB\to\lambda$ and $CD\to\lambda$ are simulated by directly
erasing 2 symbols and the rule $S'\to\lambda$ by directly erasing $S'$.
\end{proof}

A similar result can be obtained in the case of systems having rules of size
$(1,1,0;1,1,0)$.

\begin{thm}
$Mat_3INS_1^{1,0}DEL_1^{1,0} = RE.$
\end{thm}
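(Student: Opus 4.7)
The plan is to simulate a type-0 grammar $G=(N,T,P,S)$ in the special Geffert normal form introduced in Section~\ref{sec:def}, where $N=N'\cup N''$, $N''=\{A,B,C,D\}$, and the rules of $P$ have the forms $X\to bY$, $X\to Yb$, $S'\to\lambda$, $AB\to\lambda$, $CD\to\lambda$. I would build $\gamma=(V,T,\{S\},M)$ over an extended alphabet $V=N\cup T\cup\{Y'\mid Y\in N'\}$, where each primed symbol $Y'$ is a fresh auxiliary not occurring in $N$.

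For each rule $r:X\to bY$ of $P$, I include in $M$ the matrix $[\ins{X}{Y'}{\lambda},\ \ins{X}{b}{\lambda},\ \fdel{X}]$ together with a cleanup matrix $[\ins{Y'}{Y}{\lambda},\ \fdel{Y'}]$. Because in the special Geffert normal form every sentential form contains exactly one non-terminal from $N'$, from $uXv$ the two insertions produce uniquely $uXbY'v$, and context-free deletion of the unique $X$ yields $ubY'v$; the cleanup matrix then replaces the unique (freshly introduced) $Y'$ by $Y$ in place, giving $ubYv$. Rules $r:X\to Yb$ are simulated by the symmetric matrix $[\ins{X}{Y'}{\lambda},\ \ins{Y'}{b}{\lambda},\ \fdel{X}]$ followed by the same cleanup. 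The use of primed copies neatly avoids the clash that would arise when $X=Y$ in the grammar rule, where a direct simulation would create two copies of $X$ during the matrix. The rule $S'\to\lambda$ is simulated by the singleton matrix $[\fdel{S'}]$.

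The main difficulty is the erasing rules $AB\to\lambda$ and $CD\to\lambda$, since none of $A,B,C,D$ need occur uniquely and the left context of the $A$ (or $C$) to be removed is not uniform. My plan is to use the two-rule matrices $[\del{A}{B}{\lambda},\ \fdel{A}]$ and $[\del{C}{D}{\lambda},\ \fdel{C}]$. The first rule in each matrix ensures that the matrix succeeds only if an actual $AB$ (resp.~$CD$) adjacency is present, since $B$ (resp.~$D$) can be deleted only when immediately preceded by $A$ (resp.~$C$); the second rule then context-free deletes an arbitrary $A$ (resp.~$C$) from the string.

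The subtle step, which I expect to be the main obstacle, is the correctness argument for these erasing matrices, because context-free deletion of $A$ or $C$ may pick a symbol different from the one that $G$ erases. Completeness is direct: for any $G$-derivation, at each erasing step the simulation can pick, in the context-free deletion, exactly the $A$ or $C$ that $G$ erases. For soundness one observes that the context-free-simulating matrices insert every terminal at a fixed position that is never subsequently moved, so the terminal subsequence of the sentential form depends only on the sequence of applied context-free-simulating matrices and not on which specific $A$ or $C$ is chosen at an erasing step. A ``wrong'' choice in an erasing matrix merely permutes the positions of the remaining symbols from $N''$: either they can still all be eliminated eventually, in which case the resulting terminal string coincides with one reachable in $G$, or they become stranded in non-erasable configurations and the branch dies. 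This gives $L(\gamma)=L(G)$ and hence $RE\subseteq Mat_3INS_1^{1,0}DEL_1^{1,0}$; the converse inclusion is immediate.
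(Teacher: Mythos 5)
Your simulation of the left- and right-linear rules is essentially the paper's (the paper reuses the matrix $[\cins{X}{y},\cins{X}{x},\fdel{X}]$ from its first theorem; your primed-symbol variant is a harmless refinement), and $S'\to\lambda$ is handled the same way. The gap is exactly where you anticipated it: the erasing matrices $[\del{A}{B}{\lambda},\ \fdel{A}]$ are unsound, and the ``permutation'' argument you offer to repair them is false. The context-free deletion $\fdel{A}$ may remove an $A$ that is \emph{not} the one adjacent to the deleted $B$, and this does more than permute the leftover non-terminals --- it can create new $AB$/$CD$ adjacencies that the genuine rules could never produce. Concretely, take the block $ACABBD$ (left part $ACA$, right part $BBD$, which is of the shape $\{A,C\}^*\{B,D\}^*$ arising in Geffert normal form). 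Under the true rules the only applicable step is $AB\to\lambda$ at the centre, yielding $ACBD$, which is stuck; so $ACABBD$ cannot be erased in $G$. Under your matrices: delete the first $B$ (left context the centre $A$), then context-freely delete the \emph{leftmost} $A$, giving $CABD$; apply the $AB$-matrix again to get $CD$, then the $CD$-matrix to get $\lambda$. Since the whole point of the Geffert cancellation phase is that the word $w$ in $\ell\, r\, w$ is emitted only when $r$ is the exact mirror image of $\ell$, erasing non-matching pairs lets $\gamma$ output terminal words outside $L(G)$; soundness fails.

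The paper avoids this by never deleting the $A$ context-freely: it first inserts a marker $K_{AB}$ (context-freely, but its placement is then verified), and uses the matrix $[\cdel{K_{AB}}{A},\ \cdel{K_{AB}}{B},\ \fins{\lambda}]$, which deletes an $A$ immediately to the right of $K_{AB}$ and then a $B$ immediately to the right of $K_{AB}$ --- which is only possible if $K_{AB}$ was inserted directly in front of a genuine $AB$ factor. Two further matrices exchange $K_{AB}$ against a symbol $\$$ present in exactly one copy (the axiom is $\$S$), which guarantees at most one marker at a time and allows $\$$ to be removed at the end. If you want to salvage your scheme you need some analogue of this anchoring so that the deleted $A$ and the deleted $B$ are certified to be the same adjacent pair; with only one-symbol left contexts and no marker, the second deletion in your matrix cannot be tied to the first.
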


\begin{proof}
The proof is done like in the previous theorem. Right- and left-linear rules
are simulated exactly in the same manner. The rule $AB\to\lambda$ can be
simulated by three matrices (providing that the axiom is $\{\$S\}$):\\
$AB.1:[\fins{K_{AB}},\fdel{\$},\fins{\lambda}]$,\\
$AB.2:[\cdel{K_{AB}}{A},\cdel{K_{AB}}{B},\fins{\lambda}]$ and\\
$AB.3:[\fdel{K_{AB}},\fins{\$},\fins{\lambda}]$.\\

They simulate $AB\to\lambda$ by introducing in the string  a symbol $K_{AB}$ in
a context-free manner and after that by deleting one copy of adjacent $A$ and
$B$. The validity follows from the observation that there can be at most only
one copy of $K_{AB}$ in the string (because it's insertion and deletion is
synchronized with the deletion and insertion of a special symbol $\$$ initially
present in only one copy). In order to delete this symbol at the end of the
computation the matrix $[\fdel{\$},\fins{\lambda},\fins{\lambda}]$ shall be
used.

The rule $CD\to\lambda$ is simulated similarly and the rule $S'\to\lambda$ by
directly erasing $S'$.
\end{proof}

By taking deletion rules with a right context in the previous theorem we
obtain.

\begin{thm}
$Mat_3INS_1^{1,0}DEL_1^{0,1} = RE.$
\end{thm}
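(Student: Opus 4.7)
My plan is to reuse the construction of Theorem~2 almost verbatim, mirroring the marker-based erasure from left context to right context. Starting from a grammar $G$ in the special Geffert normal form with axiom $\{\$S\}$, each right- or left-linear rule $X\to bY$, $X\to Yb$ is simulated exactly as in Theorem~1 by matrices of the form $[\cins{X}{b},\cins{X}{Y},\fdel{X}]$; these use only insertion with a left context of length one and context-free deletion of a single symbol, both of which lie within $INS_1^{1,0}DEL_1^{0,1}$. The rule $S'\to\lambda$ is simulated by $[\fdel{S'},\fins{\lambda},\fins{\lambda}]$, and a final matrix $[\fdel{\$},\fins{\lambda},\fins{\lambda}]$ removes $\$$ once the derivation is over.

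For the erasing rule $AB\to\lambda$ I would introduce a fresh marker $K_{AB}$ and use the three matrices
\[
AB.1:[\fins{K_{AB}},\fdel{\$},\fins{\lambda}],\quad
AB.2:[\del{\lambda}{B}{K_{AB}},\del{\lambda}{A}{K_{AB}},\fins{\lambda}],\quad
AB.3:[\fdel{K_{AB}},\fins{\$},\fins{\lambda}],
\]
and symmetrically a marker $K_{CD}$ with three analogous matrices for $CD\to\lambda$. All rules fit in $INS_1^{1,0}DEL_1^{0,1}$ and all matrices have length three. The symbol $\$$, present in a single copy in the axiom, synchronizes $AB.1$ with $AB.3$ so that at most one $K_{AB}$ occurs in the string at a time; a poor guess for the insertion position of $K_{AB}$ makes $AB.2$ inapplicable and can be rolled back by $AB.3$, whereas a good guess --- placing $K_{AB}$ immediately to the right of a $B$ whose left neighbour is $A$ --- causes $AB.2$ to delete exactly the intended pair $AB$ and leave $K_{AB}$ in position for $AB.3$.

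The only delicate point, which I see as the main (small) obstacle, is verifying that the mirrored ordering of deletions in $AB.2$ still pinpoints an adjacent $AB$ pair: after $\del{\lambda}{B}{K_{AB}}$ removes the $B$ sitting immediately to the left of $K_{AB}$, the symbol now adjacent to $K_{AB}$ on its left is the original left neighbour of that $B$, so demanding $A$ in that position for the second deletion is precisely the adjacency condition for $AB$ in the original string. This is the exact left-right mirror of the argument in Theorem~2 and poses no new difficulty; no other step in the simulation changes substantively, and the inclusion $L(\gamma)=L(G)$ then follows by the same reasoning as before.
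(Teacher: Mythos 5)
Your proposal matches the paper's intent exactly: the paper proves this theorem with a single remark that one should take the deletion rules of Theorem~2 with a right context instead of a left one, which is precisely the left--right mirroring of the $K_{AB}$/$K_{CD}$ machinery that you spell out, and your adjacency argument for the mirrored $AB.2$ is the right one. The only slip is in your generic matrix for the linear rules: for $X\to bY$ the two insertions must be performed in the order $[\cins{X}{Y},\cins{X}{b},\fdel{X}]$ (rightmost symbol of the right-hand side inserted first, as in Theorem~1), since your written order $[\cins{X}{b},\cins{X}{Y},\fdel{X}]$ yields $Yb$ rather than $bY$.
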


We give below the proof for the case of systems of size $(2,0,0;1,1,0)$.

\begin{thm}
$Mat_3INS_2^{0,0}DEL_1^{1,0} = RE.$
\end{thm}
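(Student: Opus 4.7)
The plan is to simulate a grammar $G=(N,T,P,S)$ in the special Geffert normal form by a matrix insertion-deletion system $\gamma=(V,T,\{\$S\},M)$, where $\$$ is a fresh global synchronization marker. Throughout the simulation we maintain the invariant that $\$$ is present in the sentential form if and only if no auxiliary marker is currently pending, and we exploit the fact that the unique $N'$-symbol in the sentential form of $G$ is the only anchor our matrices will need.

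For each right-linear rule $r:X\to bY$ of $G$ we add the single matrix
\begin{equation*}
r:[\fins{bY},\ \cdel{Y}{X},\ \fins{\lambda}].
\end{equation*}
Because $Y\in N'$ is fresh at the moment $r$ is applied (the unique $N'$-symbol in the string is $X\neq Y$), the only way the deletion $\cdel{Y}{X}$ can succeed is if the context-free insertion placed $bY$ immediately before $X$. Thus this matrix faithfully implements $X\to bY$; since it does not touch $\$$ it can be freely interleaved with other matrices.

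The main obstacle is the left-linear case $r:X\to Yb$: the naive matrix $[\fins{Yb},\cdel{b}{X},\fins{\lambda}]$ is incorrect because $b$ may already be left-adjacent to $X$ in the sentential form, allowing a spurious deletion independent of the insertion position. The plan is, for every $b$ appearing on the right-hand side of some left-linear rule, to introduce two fresh non-terminal markers $K_b,K'_b$ and to simulate $r$ by three coordinated matrices:
\begin{align*}
r.1&:[\fins{YK_b},\ \cdel{K_b}{X},\ \fdel{\$}],\\
r.2&:[\fins{bK'_b},\ \cdel{K'_b}{K_b},\ \fins{\$}],\\
r.3&:[\fdel{K'_b}].
\end{align*}
When $r.1$ fires, the $\$$-invariant guarantees that $K_b$ is fresh, so the context-free insertion of $YK_b$ is forced just before $X$, turning $uXv$ into $uYK_bv$ and consuming $\$$. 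The matrix $r.2$ has $K'_b$ fresh, so $bK'_b$ is forced immediately before $K_b$; after deleting $K_b$ and reinserting $\$$ we obtain $uYbK'_bv$, and $r.3$ finishes by erasing $K'_b$. Interleaved right-linear matrices during the pending phase are harmless: each one forces its insertion immediately before the current non-terminal, so $K_b$ remains immediately to the right of the current non-terminal and $r.2$ still delivers $b$ in the correct position.

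The non-context-free rules $AB\to\lambda$, $CD\to\lambda$, and $S'\to\lambda$ are handled exactly as in the proof of Theorem~2; for $AB\to\lambda$ we use
\begin{align*}
AB.1&:[\fins{K_{AB}},\fdel{\$},\fins{\lambda}],\\
AB.2&:[\cdel{K_{AB}}{A},\cdel{K_{AB}}{B},\fins{\lambda}],\\
AB.3&:[\fdel{K_{AB}},\fins{\$},\fins{\lambda}],
\end{align*}
which is available in our setting because $\fins{K_{AB}}$ is a context-free $1$-symbol insertion, a special case of the allowed $(2,0,0)$-insertions, and because after $\cdel{K_{AB}}{A}$ the marker $K_{AB}$ becomes left-adjacent to $B$, enabling $\cdel{K_{AB}}{B}$. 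The rule $CD\to\lambda$ is symmetric, $S'\to\lambda$ becomes $[\fdel{S'},\fins{\lambda},\fins{\lambda}]$, and a closing matrix $[\fdel{\$},\fins{\lambda},\fins{\lambda}]$ removes $\$$ at the very end. Correctness then follows in the standard way: every $G$-derivation lifts to a $\gamma$-derivation by applying the matrices in the natural order (with $r.2,r.3$ clean-up after each $r.1$), while any terminal string of $\gamma$ must have erased $\$$, every $K$-marker, and every symbol of $N'\cup N''$, which by the $\$$-synchronization invariant and the freshness arguments forces each matrix application to correspond to a legitimate step of $G$.
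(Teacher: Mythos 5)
Your construction follows essentially the same route as the paper's: a simulation of the special Geffert normal form in which the right-linear rules and the erasing rules $AB\to\lambda$, $CD\to\lambda$ are handled by exactly the gadgets the paper uses (a context-free insertion of $bY$ anchored by the deletion \cdel{Y}{X}, and the $K_{AB}$/\$-mutex). The only divergence is the left-linear gadget: the paper first replaces $A$ by a fresh pair $X_AY_A$ via $[\fins{X_AY_A},\cdel{Y_A}{A}]$ and then rewrites $Y_A$ into $Cb$ via \cdel{b}{Y_A} (the fresh $X_A$ sitting to the left of $Y_A$ guarantees that no pre-existing $b$ can anchor that deletion), deferring the removal of the leftover $X_A$ to the second stage; you instead place the left-linear rules under the same \$-mutex and clean up immediately with $r.3$. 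Both designs are viable.

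However, one step of your justification does not hold as written. The invariant ``\$ is present iff no auxiliary marker is pending'' is false: after $r.2$ the string contains \$ \emph{and} the still-pending $K'_b$, and nothing forces $r.3$ to fire before the next matrix; in particular your claim that $K'_b$ is fresh when $r.2$ fires is unwarranted, since a stale $K'_b$ left over from an earlier application of a rule with the same letter $b$ may still be in the string. The construction nevertheless survives, but for a reason you must actually supply: the live $K_b$ is always immediately preceded by the current $N'$-symbol (right-linear interleavings preserve this, and after $S'\to\lambda$ it is preceded by the last symbol of the prefix $u\in(T\cup N'')^*$), whereas every stale $K'_b$ lies strictly to the right of the current non-terminal; hence the deletion \cdel{K'_b}{K_b} can only ever be anchored by the freshly inserted $K'_b$, and this is what forces $bK'_b$ into the correct position. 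You should restate the invariant for the unprimed markers only (for $K_b$, $K_{AB}$, $K_{CD}$ it does hold, since each is inserted exactly when \$ is consumed and removed exactly when \$ is restored) and add this adjacency argument; with that repair the proof is complete.
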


\begin{proof}
The proof is based on a simulation of a type-0 grammar in the Geffert normal
form (as presented in Section~\ref{sec:def}). Let $G=(V,T,S,P)$ be such a
grammar. We construct the system $\gamma=(V\cup V',T,\{S\},M)$ as follows
($V'=\{X_A,Y_A\mid A\in V\}\cup \{K_{AB},K_{CD}\})$.

For every rule $r:A\to bC\in P$ we add to $M$ the matrix\\
$r:[\fins{bC},\cdel{C}{A}]$.

For every rule $r:A\to Cb\in P$ we add to $M$ the matrices\\
$r.1:[\fins{X_AY_A},\cdel{Y_A}{A}]$,\\
$r.2:[\fins{Cb},\cdel{b}{Y_A},\fdel{\lambda}]$ and\\
$r.3:[\fdel{\$},\fdel{X_A},\fins{\$}]$.

For rules $AB\to\lambda\in P$ and $CD\to\lambda\in P$ we add to $M$ following
six matrices:
 {\small
\begin{align*}
AB.1&:[\fins{K_{AB}},\fdel{\$},\fins{\lambda}], &CD.1&:[\fins{K_{CD}},\fdel{\$},\fins{\lambda}],\\
AB.2&:[\cdel{K_{AB}}{A},\cdel{K_{AB}}{B},\fins{\lambda}],& CD.2&:[\cdel{K_{CD}}{C},\cdel{K_{CD}}{D},\fins{\lambda}],\\
AB.3&:[\fdel{K_{AB}},\fins{\$},\fins{\lambda}], &CD.3&:[\fdel{K_{CD}},\fins{\$},\fins{\lambda}].
\end{align*}
}

The rule $S'\to\lambda$ is simulated by the matrix that introduces symbol $\$$
$S':[\fdel{S'},\fins{\$},\fins{\lambda}]$.

It is clear that $L(\gamma)=L(G)$. Indeed, any rule $A\to bC$ is simulated
directly by inserting $bC$ and deleting $A$ in the context of $C$. The right
position for the insertion is insured by the uniqueness of $A$. Rules $r:A\to
Cb$ are simulated in a different way. First $A$ is replaced by $X_AY_A$ and
after that $Y_A$ is rewritten by $Cb$ as in the previous case. We remark that
by inserting $X_AY_A$ we insure that there is no symbol $b$ before $Y_A$. This
permits to correctly place $Cb$. The additional symbol $X_A$ remaining in the
string is deleted during the second stage (when symbol $\$$ is introduced). As
before, in order to delete $\$$ at the end of the computation the matrix
$[\fdel{\$},\fins{\lambda},\fins{\lambda}]$ shall be used.
\end{proof}

Since the matrix control is a particular case of the graph control we obtain
\begin{thm}\cite{KRV11}
For any $k>0$, $REG\setminus Mat_kINS_2^{0,0}DEL_2^{2,0} \ne \emptyset.$
\end{thm}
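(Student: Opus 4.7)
The plan is to reduce the statement to the corresponding result for graph-controlled insertion-deletion systems cited as [KRV11]. The key observation, already hinted at in the remark preceding the theorem, is that matrix control is a syntactic special case of graph control, so any language generated by a matrix insertion-deletion system is also generated by a graph-controlled system of the same rule size.

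First, I would give an explicit simulation. Given a matrix system $\gamma=(V,T,A,M)$ with matrices $r_i=[r_{i1},\dots,r_{ik_i}]$ where each $k_i\le k$, I would construct a graph-controlled insertion-deletion system with a distinguished ``home'' state $q_0$ at which terminal strings are collected, together with fresh intermediate states $q_{i,1},\dots,q_{i,k_i-1}$ for every matrix $r_i$. The control graph has an edge labelled with the rule $r_{ij}$ going from $q_{i,j-1}$ to $q_{i,j}$, where we identify $q_{i,0}=q_{i,k_i}=q_0$. Thus from $q_0$ one non-deterministically starts some matrix by applying its first rule, is then forced to traverse the remaining rules of that matrix in order, and returns to $q_0$. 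This simulation preserves the alphabet, the axioms, and the size $(n,m,m';p,q,q')$ of each individual rule; moreover the language obtained at $q_0$ coincides with $L(\gamma)$. Hence $Mat_kINS_2^{0,0}DEL_2^{2,0}$ is contained in the family generated by graph-controlled insertion-deletion systems of size $(2,0,0;2,2,0)$.

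Next, I would invoke the result of [KRV11], which provides a concrete regular language that no graph-controlled insertion-deletion system of size $(2,0,0;2,2,0)$ can generate. By the inclusion just established, this same witness language lies outside $Mat_kINS_2^{0,0}DEL_2^{2,0}$ for every $k>0$, which yields the theorem.

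The only point requiring care is correctness of the simulation, namely that the graph forces all rules of a chosen matrix to be applied successively without interleaving with rules from other matrices; this is automatic because each intermediate state $q_{i,j}$ has a unique outgoing edge and successful derivations are precisely those terminating at $q_0$. There is no serious obstacle in this argument, since the substantive content -- the non-inclusion of some regular language in the graph-controlled family of size $(2,0,0;2,2,0)$ -- is deferred to the cited paper.
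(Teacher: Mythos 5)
Your proposal matches the paper's argument exactly: the paper proves this theorem only by the one-line observation that matrix control is a particular case of graph control (each matrix becoming a linear path from and back to a single input/output node), and then defers the regular witness language to the cited result of~\cite{KRV11} for graph-controlled systems of the same rule size. Your explicit construction of the simulating control graph is just a fleshed-out version of the same reduction, so there is nothing to add.
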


\section{Computational completeness for binary matrices}

In this section we show that binary matrices suffice for computational
completeness.

\begin{thm}
$Mat_2INS_2^{0,0}DEL_1^{1,0} = RE.$
\end{thm}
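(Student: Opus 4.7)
\emph{Plan.} I would simulate a grammar $G$ in special Geffert normal form by a matrix insertion-deletion system $\gamma=(V,T,\{S\},M)$, adapting the construction of the $Mat_3INS_2^{0,0}DEL_1^{1,0}=RE$ theorem above but distributing each three-operation block across binary matrices. The alphabet $V$ will be augmented with a fresh pair of markers $X_r,Y_r$ for each left-linear rule $r$, with markers $K_{AB},K_{CD}$ for the two erasing rules, and with a single lock symbol $\$$ introduced only when stage two begins.

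Every right-linear rule $X\to bY$ is simulated by the binary matrix $[\fins{bY},\cdel{Y}{X}]$: by uniqueness of the middle non-terminal $X\in N'$ in any stage-1 sentential form, the context-free insertion of $bY$ must land immediately to the left of $X$ in order for $\cdel{Y}{X}$ to apply. I would then split every left-linear rule $r:X\to Yb$ into three matrices, namely $r.1:[\fins{X_rY_r},\cdel{Y_r}{X}]$ replacing $X$ by $X_rY_r$ (again forced by uniqueness of $X$), $r.2:[\fins{Yb},\cdel{b}{Y_r}]$ replacing $Y_r$ by $Yb$ (forced by uniqueness of the newly-created $Y_r$), and a cleanup matrix $r.3:[\fdel{X_r}]$. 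Because $X_r,Y_r$ are rule-specific and can only be produced by $r.1$, any terminal derivation must match applications of $r.1$, $r.2$ and $r.3$ one-to-one, so each triple faithfully implements one grammar-rule application.

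The transition $S'\to\lambda$ is simulated by $[\fdel{S'},\fins{\$}]$, which begins stage two by creating the unique lock. For $AB\to\lambda$ I would use $AB.1:[\fins{K_{AB}},\fdel{\$}]$, $AB.2:[\cdel{K_{AB}}{A},\cdel{K_{AB}}{B}]$, and $AB.3:[\fdel{K_{AB}},\fins{\$}]$, together with symmetric matrices using $K_{CD}$ for $CD\to\lambda$. The lock ensures that at most one of $K_{AB},K_{CD}$ is ever present: $AB.1$ fires only when $\$$ is in the string, consumes it, and $\$$ is restored only by $AB.3$ or $CD.3$ after the corresponding marker has been deleted. Hence $AB.2$ can complete only when its $K_{AB}$ has been placed immediately to the left of an adjacent $AB$ factor, and the net effect is exactly $AB\to\lambda$; a final matrix $[\fdel{\$}]$ removes the lock at the end of the computation.

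\emph{Main obstacle.} The delicate direction will be $L(\gamma)\subseteq L(G)$, since splitting the previous three-rule matrices into binary pieces enlarges the scheduling freedom. I would argue this via invariants on the counts of auxiliary symbols: at any moment at most one of $\$$, $K_{AB}$, $K_{CD}$ is present (enforced by the lock mechanism); each $X_r$ is created only by $r.1$ and removed only by $r.3$, and each $Y_r$ only by $r.1$ and $r.2$; and a terminal string contains no auxiliary symbol at all. Together these should imply that every successful derivation decomposes into complete cycles matching derivation steps of $G$, and in particular that $AB.2$ (resp.\ $CD.2$) only deletes an adjacent $AB$ (resp.\ $CD$) factor. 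The reverse inclusion is straightforward: every rule of $G$ is simulated by choosing the context-free insertions to land immediately to the left of the unique target symbol and scheduling the matrices in the order described.
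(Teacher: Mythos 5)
Your overall architecture (one binary matrix per right-linear rule, a replace--replace--cleanup triple per left-linear rule, and a \$-lock for the erasing rules) follows the paper's ternary construction, and your stage-two treatment of $AB\to\lambda$, $CD\to\lambda$ is essentially identical to the paper's. The gap is in the left-linear simulation, and it sits exactly where the paper's own binary construction becomes much heavier than yours. Your cleanup matrix $r.3:[\fdel{X_r}]$ is applicable as soon as $X_r$ exists, in particular \emph{before} $r.2$ has fired. But $X_r$ is precisely the shield that forces the insertion in $r.2:\mrule{\fins{Yb}}{\cdel{b}{Y_r}}$ to land between $X_r$ and $Y_r$: the deletion $\cdel{b}{Y_r}$ only needs \emph{some} occurrence of $bY_r$, and none pre-exists only because the left neighbour of $Y_r$ is $X_r\neq b$. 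Take a sentential form $ubXw_2$ (produced, say, by a right-linear rule $Z\to bX$) and simulate $r:X\to Yb$ by applying $r.1$ and then $r.3$; the string becomes $ubY_rw_2$, and now $r.2$ may insert $Yb$ at an arbitrary position and delete $Y_r$ against the pre-existing $b$, yielding a string in which $Y$ and the fresh $b$ sit in the wrong places. That string carries exactly one non-terminal of $N'$ and no junk symbols, so the derivation can continue to a terminal word that in general is not in $L(G)$; hence $L(\gamma)\subseteq L(G)$ fails. Your invariants (creation/destruction bookkeeping and the one-to-one matching of $r.1,r.2,r.3$ applications) are all true but are about multiplicities, not about order, so they do not exclude this interleaving.

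Note also that with rules of size $(2,0,0;1,1,0)$ there is no cheap repair: the deletion of $X_r$ can only be guarded by a \emph{left} context, which cannot test whether $Yb$ has already been installed to its right. This is why the paper does not simply cut its ternary matrices in half. Instead it replaces $A$ by $\#_1^r\#_2^r$, turns $\#_1^r$ into $C$, and turns $\#_2^r$ (via an intermediate pair $\#_3^r\#_4^r$) into $\#_5^rb$, so that the two junk symbols $\#_3^r$ and $\#_5^r$ become \emph{adjacent} exactly when the whole chain has completed; only then can they be erased, by the matrix $\mrule{\cdel{K^r}{\#_3^r}}{\cdel{K^r}{\#_5^r}}$, which requires them to be consecutive behind a lock-controlled marker $K^r$. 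In particular the shield $\#_3^r$ guarding the critical insertion of $b$ cannot disappear before that insertion has taken place. You would need to import a mechanism of this kind --- some way of making the cleanup of $X_r$ conditional on $r.2$ having already fired --- before your $L(\gamma)\subseteq L(G)$ direction goes through.
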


\begin{proof}

The proof is based on a simulation if type-0 grammar in the Geffert normal form
(as presented in Section~\ref{sec:def}). Let $G=(V,T,S,P)$ be such a grammar.
We construct the system $\gamma=(V\cup V',T,w,M)$ as follows.

$V'=\{\#_k^r, K^r\mid r\in P, 1\le k\le 5\}\cup\{K_{AB},K_{CD},\$\}$, and
$w=\{\$S\}$.


For every rule $r:A\to bC\in P$ we add to $M$ the matrix
$$r.1:\mruleI{bC}{C}{A}.$$

For every rule $r:A\to Cb\in P$ we add to $M$ following matrices:

\begin{align*}
&r.1: \mruleI{\#_1^r\#_2^r}{\#_2^r}{A}\\
&r.2: \mruleI{C}{C}{\#_1^r}\\
&r.3: \mruleI{\#_3^r\#_4^r}{\#_4^r}{\#_2^r}\\
&r.4: \mruleI{\#_5^rb}{b}{\#_4^r}\\
&r.5: \mrule{\fdel{\$}}{\fins{K^r}}\\
&r.6: \mrule{\fdel{K^r}}{\fins{\$}}\\
&r.7: \mrule{\cdel{K^r}{\#_3^r}}{\cdel{K^r}{\#_5^r}}\\
\end{align*}

For rules $AB\to\lambda\in P$ and $CD\to\lambda\in P$ we add to $M$ following
matrices:
\begin{align*}
&AB.1: \mrule{\fdel{\$}}{\fins{K_{AB}}} && AB.1: \mrule{\fdel{\$}}{\fins{K_{CD}}}\\
&AB.2: \mrule{\fdel{K_{AB}}}{\fins{\$}} && AB.2: \mrule{\fdel{K_{CD}}}{\fins{\$}}\\
&AB.3: \mrule{\cdel{K_{AB}}{A}}{\cdel{K_{AB}}{B}} && AB.3: \mrule{\cdel{K_{CD}}{C}}{\cdel{K_{CD}}{D}}\\
\end{align*}

The rule $S'\to\lambda$ can be simulated by the following matrix:
$$
S':\mrule{\fdel{S'}}{\fins{\lambda}}.
$$

We claim that $L(\gamma)=L(G)$. First we show that $L(\gamma)\supseteq L(G)$.
Let $w_1Aw_2$ be a sequential form in $G$ (initially $S$) and let $w_1Aw_2\To_r
w_1bCw_2$ be a derivation in $G$. We show that in $\gamma$ we obtain the same
result:
\begin{equation*}
w_1Aw_2\TTo_{r.1}  w_1bCw_2.
\end{equation*}

We remark that if the sequence $bC$ is not inserted before $A$, then the second
rule from the matrix will not be applicable (we recall that $w_1w_2$ does not
contain non-terminals from $V\setminus T$ and that $b\ne\lambda$.

Consider now the following derivation in $G$: $w_1Aw_2\To_r w_1Cbw_2$. This
derivation is simulated in $\gamma$ as follows.

\begin{multline*}
\$w_1Aw_2\TTo_{r.1} \$w_1\#_1^r\#_2^rw_2\TTo_{r.2} \$w_1C\#_2^rw_2\TTo_{r.3}\\
 \TTo_{r.3}\$w_1C\#_3^r\#_4^rw_2\TTo_{r.4}\$w_1C\#_3^r\#_5^rbw_2\TTo_{r.5}\\
 \TTo_{r.5}w_1CK^r\#_3^r\#_5^rbw_2
 \TTo_{r.6}
 w_1CK^rbw_2\TTo^{r.7}\$w_1Cbw_2.
\end{multline*}

So the grammar $G$ is simulated as follows. Firstly the first stage of the
generation is simulated by simulating left-linear and after that right-linear
productions. After changing to the second stage, rules $AB.i$ and $CD.i$ ($1\le
i\le 3$) can be applied, removing symbols $A,B,C,D$.

Now in order to prove the converse inclusion $L(\gamma)\subseteq L(G)$ we show
that no other words can be obtained in $\gamma$.

We start by observing  that matrices $r.1-r.4$ for a rule $r:A\to Cb$ as well
as $r.1$ for a rule $r:A\to bC$ have the form \mruleI{x}{x}{y}, $x\in V\cup
V^2,y\in V$. It is not difficult to see that if $x$ was not already present in
the string then such a matrix correspond to the rewriting rule $y\to x$.
Indeed, since $x$ is not present in the string, it should have been inserted
before the symbol $y$, which is deleted afterwards.

The matrices $r.5-r.7$ insure that a sequence of symbols $\#_3^r\#_5^r$ is
deleted. This is performed by introducing into the string a new special symbol
$K^r$. If it is not introduced before $\#_3^r$, then nothing happens and $K^r$
can be replaced by \$. Otherwise, it can delete the two symbols in the
sequence. The validity of the simulation is ensured by the fact that the symbol
\$ is always present in at most one copy.

In a similar way rules $AB.1-AB.3$ and $CD.1-CD.3$ act.

In order to conclude that the simulation of the rule $A\to Cb$ does not yield
other words we give the following remarks:
\begin{itemize}
\item Symbol $A$ is replaced by a pair of symbols $\#_1^r\#_2^r$, where
    $\#_1^r$ evolves to $C$ and $\#_2^r$ evolves to $b$.
\item Symbol $b$ is inserted if and only if $\#_3^r$ (and $\#_4^r$) is
    present in the string. This insures that this symbol is separated from
    any non-terminal that can be derived from $C$ and hence the insertion
    of this symbol cannot interfere with some other insertion that could be
    operated.
\end{itemize}

Since no other words can be generated we can reconstruct a derivation in $G$
starting from a derivation in $\gamma$. For this it is enough to follow
configurations where there is a non-terminal from $V\setminus \{A,B,C,D\}$ in
order to reconstruct the first stage of the derivation from $G$. The deletion
of $AB$ and $CD$ has a direct correspondence to the second stage of $G$. So,
$L(\gamma)=L(G)$.
\end{proof}

\begin{thm}
$Mat_2INS_1^{1,0}DEL_2^{0,0} = RE.$
\end{thm}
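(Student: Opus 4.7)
The plan is to simulate a grammar $G$ in special Geffert normal form by a matrix insertion-deletion system $\gamma = (V \cup V', T, \{S\}, M)$ whose matrices have two rules each, where $V'$ contains a fresh marker $\#^r$ for every context-free rule $r \in P$.

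For a right-linear rule $r: X \to bY$ with $X, Y \in N'$, $b \in T \cup N''$, I would introduce
\[
r.1: \mrule{\cins{X}{\#^r}}{\cins{\#^r}{Y}}, \qquad r.2: \mrule{\cins{\#^r}{b}}{\fdel{X\#^r}}.
\]
Matrix $r.1$ inserts $\#^r$ immediately after the unique non-terminal $X$ and then inserts $Y$ immediately after the fresh symbol $\#^r$, producing $\ldots X\#^r Y \ldots$; the order of the two insertions is important, since using $\#^r$ (rather than $X$) as the left context of the second insertion avoids ambiguity in the degenerate case $Y = X$. Matrix $r.2$ then inserts $b$ after $\#^r$ and removes the adjacent pair $X\#^r$ by a length-$2$ context-free deletion, giving $\ldots bY \ldots$. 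A left-linear rule $r: X \to Yb$ is handled symmetrically by $r.1: \mrule{\cins{X}{\#^r}}{\cins{\#^r}{b}}$ and $r.2: \mrule{\cins{\#^r}{Y}}{\fdel{X\#^r}}$. The rules $AB \to \lambda$, $CD \to \lambda$ and $S' \to \lambda$ are encoded as the direct deletion matrices $\mrule{\fdel{AB}}{\fins{\lambda}}$, $\mrule{\fdel{CD}}{\fins{\lambda}}$ and $\mrule{\fdel{S'}}{\fins{\lambda}}$, with the empty insertion serving as a size-two filler.

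The inclusion $L(G) \subseteq L(\gamma)$ is immediate by induction on derivation length: each step of $G$ is mirrored by the corresponding pair of matrices (or single matrix). The converse inclusion $L(\gamma) \subseteq L(G)$ is the main obstacle. The key invariant is that every marker $\#^r$ is introduced exclusively by $r.1$, and the only rule that can remove it is $\fdel{X\#^r}$ inside $r.2$, which requires $\#^r$ to be immediately preceded by a symbol $X \in N'$. I would argue that any deviation from the clean pattern---firing $r.1$ twice without an intervening $r.2$, or interleaving $r.1$ with another $r'.1$ that rewrites the same unique $N'$-nonterminal---separates at least one marker from its $N'$-context by a non-$N'$ symbol, making it permanently un-removable and therefore the string un-terminal. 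Benign interleavings, such as applying $r''.1$ on the $Y$ produced by $r.1$ while $\#^r$ still awaits $r.2$, commute with the completion of $r.2$ and yield the same final string as the clean sequential simulation.

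The hard part will be the careful case analysis that enumerates the possible configurations of co-existing markers and verifies, in each case, that an $N'$-stranded marker cannot be rescued by any further matrix application. Once this dead-string argument is in place, erasing the markers from any successful derivation of $\gamma$ reconstructs a derivation of $G$ with the same terminal yield, establishing $L(\gamma) = L(G)$.
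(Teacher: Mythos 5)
There is a genuine gap here, and in fact the construction as given is unsound. The root of the problem is that your matrix $r.1: \mrule{\cins{X}{\#^r}}{\cins{\#^r}{Y}}$ consumes nothing: as long as $X$ is present it can be applied again and again, each application adding a fresh copy of $\#^r$ and of $Y$. Once two copies of $\#^r$ coexist, the left-context-only insertions and the context-free deletion $\fdel{X\#^r}$ cannot tell them apart, and --- contrary to your key claim --- the surplus marker need not end up stranded. Concretely, take two right-linear rules $r: X\to bY$ and $r': Y\to cX$ (such a pair does arise in the special Geffert normal form, e.g.\ from $S\to ACS$, which yields $S\to AZ$ and $Z\to CS$). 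Then
\begin{align*}
X &\TTo_{r.1} X\#^rY \TTo_{r.1} X\#^rY\#^rY \TTo_{r.2} Y\#^rbY\\
&\TTo_{r'.1} Y\#^{r'}X\#^rbY \TTo_{r.2} Y\#^{r'}bbY \TTo_{r'.2} cbbY,
\end{align*}
where in the third step $b$ is inserted after the second copy of $\#^r$ while $\fdel{X\#^r}$ erases the first copy, and in the fourth step the $X$ freshly inserted by $r'.1$ lands immediately to the left of the surviving $\#^r$ and ``rescues'' it. All markers are gone and exactly one $N'$-symbol remains, yet $cbbY$ is not a sentential form of $G$ (the forms reachable from $X$ with these two rules are $(bc)^kX$ and $(bc)^kbY$), so the derivation can continue to a terminal word outside $L(G)$. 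Your dead-string invariant simply does not cover this case: no marker is ever permanently separated from an $N'$-context, so the case analysis you defer cannot succeed as described.

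The paper's proof avoids exactly this trap with two devices you lack. First, the opening matrix of each simulation block \emph{deletes} the nonterminal at once ($p.1:\mrule{\fdel{A}}{\cins{Y}{p}}$), so the block cannot be restarted. Second, every insertion at the working position is synchronized with a push onto a ``stack'' kept after a global end marker $Y$ (axiom $XSY$), and the stack symbols can only be popped in one specific order; the string can become terminal only if the stack empties, which forces each phase to run exactly once. Some such counting mechanism is indispensable for rules of size $(1,1,0;2,0,0)$, and I do not see a local repair of your scheme that supplies one.
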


\begin{proof}

The proof is based on a simulation if type-0 grammar in the Geffert normal form
(as presented in Section~\ref{sec:def}). Let $G=(V,T,S,P)$ be such a grammar.
We construct the system $\gamma=(V\cup V',T,w,M)$ as follows.

$V'=\{p,p'\mid p: A\to Cb\in P\}\cup\{p,p_2,p_3,\#_p,\#_p',C_1^p,C_2^p\mid
p:A\to bC\in P\}\cup\{X,Y\}$, and $w=\{XSY\}$.

%

For every rule $p:A\to Cb\in P$ we add to $M$ following matrices:
\begin{align*}
& p.1: \mrule{\fdel{A}}{\cins{Y}{p}}\\
& p.2: \mrule{\cins{X}{b}}{\cins{Y}{p'}}\\
& p.3: \mrule{\cins{X}{C}}{\fdel{p'p}}\\
\end{align*}

For every rule $p:A\to bC\in P$ we add to $M$ following matrices:
\begin{align*}
&p.1: \mrule{\fdel{A}}{\cins{Y}{p}}\\
&p.2: \mrule{\cins{X}{C_1^p}}{\cins{Y}{\#_p}}\\
&p.3: \mrule{\cins{Y}{\#_p'}}{\cins{Y}{p_2}}\\
&p.4: \mrule{\cins{Y}{p_3}}{\fdel{\#_p'\#_p}}\\
&p.5: \mrule{\cins{X}{b}}{\fdel{p_2}}\\
&p.6: \mrule{\fdel{X}}{\cins{C_1^p}{C_2^p}}\\
&p.7: \mrule{\cins{C_2^p}{X}}{\fdel{p_3p}}\\
&p.8: \mrule{\cins{X}{C}}{\fdel{C_1^pC_2^p}}\\
\end{align*}

For rules $AB\to\lambda\in P$ and $CD\to\lambda\in P$ we add to $M$ following
matrices:
\begin{align*}
&AB: \mrule{\fdel{AB}}{\fins{\lambda}} && CD: \mrule{\fdel{CD}}{\fins{\lambda}}
\end{align*}

We also add to $M$ the matrices $XY: \mrule{\fdel{X}}{\fdel{Y}}$ and
$S':\mrule{\fdel{S'}}{\fins{\lambda}}$.

We claim that $L(\gamma)=L(G)$. First we show that $L(\gamma)\supseteq L(G)$.
The simulation uses the following idea. Symbol $X$ marks the site where the
non-terminal is situated, while symbol $Y$ marks a position in the string (for
commodity we mark the end of the string). The sequence of insertions and
deletions is synchronized between these two positions: inserting something at
position $X$ also inserts or deletes symbols at position $Y$. Finally, symbols
at position $Y$ are checked to form some particular order. So in some sense $Y$
corresponds to a ``stack'' where some information is stored and after that the
``stack'' is checked to be in some specific form.

More precisely, let $w_1Aw_2$ be a sequential form in $G$ (initially $S$) and
let $w_1Aw_2\To_r w_1Cbw_2$ be a derivation in $G$. We show that in $\gamma$ we
obtain the same result:

\begin{multline*}
w_1XAw_2Y\TTo_{p.1} w_1Xw_2Yp\TTo_{p.2} w_1Xbw_2Yp'p\TTo_{p.2}^{k-1}\\
 \TTo_{p.2}^{k-1}w_1Xb^kw_2Yp'p^k\TTo_{p.3}w_1XCb^kw_2Yp'^{k-1}.
\end{multline*}

Since there are no rules eliminating $p'$ by itself (it can be eliminated only
if $p$ is following it, which is no more possible), the above string can become
terminal if and only if one insertion is done at the second step (i.e. $k=1$).
Hence we obtain the string $w_1XCbw_2Y$, \ie{} we correctly simulated the
corresponding production of the grammar.

We remark that the rule $p.2$ can be used at any time, but this yields again a
symbol $p'$ after $Y$ which cannot be removed.

Now consider the following derivation in $G$: $w_1Aw_2\To_r w_1bCw_2$. This
derivation is simulated in $\gamma$ as follows.

%

\begin{multline*}
w_1XAw_2Y\TTo_{p.1} w_1Xw_2Yp\TTo_{p.2} w_1XC_1^pw_2Y\#_pp\TTo_{p.3}\\
 \TTo_{p.3} w_1XC_1^{p}w_2Yp_2\#_p'\#_pp
 \TTo_{p.4} w_1XC_1^{p}w_2Yp_3p_2p \TTo_{p.5}\\
 \TTo_{p.5} w_1XbC_1^{p}w_2Yp_3p\TTo_{p.6} w_1bC_1^{p}C_2^pw_2Yp_3p\TTo_{p.7}\\
\TTo_{p.7} w_1bC_1^pC_2^pXw_2Y\TTo_{p.8} w_1b^sXCw_2Y
\end{multline*}

The deletion rules $AB\to\lambda$,  $CD\to\lambda$ and $S'\to\lambda$ are
simulated directly by rules $AB$ and $CD$ and symbols $X$ and $Y$ are
eliminated by the rule $XY$. Hence $L(G)\subseteq L(\gamma)$.

Now in order to prove the inclusion $L(\gamma)\subseteq L(G)$ we
 we show that only specific sequences of rule application
can lead to a terminal string. The case of the simulation of rules of type
$A\to Cb$ is discussed above. We shall concentrate now on the simulation of
rules of type $A\to bC$. We give below the rules' dependency graph, where by
$x\leftarrow y$ we indicate that in order to apply $y$, we should apply at
least one time $x$.

$$\xymatrix{
    &     &\ar[dl] p_4 & & \\
p.1 &\ar[l] p.2 &\ar[l] p_5 &\ar[l]\ar[ul]\ar[dl] p_7 &\ar[l] p_8\\
    &     &\ar[ul] p_6 & & \\
}
$$

Indeed, if rule $p.1$ is not applied first, then additional symbols are added
after $Y$ and it is clear that they cannot be eliminated. If $p.3$ is applied
before $p_2$, then the introduced symbol $\#_p'$ can never be deleted (as there
is no symbol $\#_p$ afterwards). Rule $p.4$ involves symbols introduced by
$p.2$ and $p.3$, so it cannot be used before. Rule $p.5$ cannot be applied
before $p.3$, however its application can be interchanged with the application
of $p.4$. Rule $p.6$ can be applied once after $p.2$, while in order to apply
$p.7$ we need to apply before rules $p.6$, $p.5$ and $p.4$. The rule $p.8$ is
applicable only after rule $r.7$.

It is still possible to apply some rules several times. Now we show that each
rule must be applied exactly once. Since there is only one copy of $A$, only
one copy of $p$ will be available. Hence rule $p.7$ will be applied only one
time. We can also deduce that only one copy of $p_3$ shall be produced, hence
rule $p.4$ should be applied only one time. But this implies the uniqueness of
symbols $\#_p$ and $\#_p'$, hence a single application of rules $p.2$ and
$p.3$. The last affirmation implies that $p_2$ is generated only once, hence
$p.5$ can be applied only once. From $p.6$ we can deduce that $C_2^p$ is
inserted once, hence $p.8$ is executed only one time. Finally, from the $p.2$
we can deduce that $C_1^r$ is inserted only once, so after its deletion in
$p.8$ no more copies will remain.

So, we obtain that any terminal derivation in $\gamma$ needs an application of
a specific sequence of rules. Hence, it is enough to look at strings from
$\gamma$ containing a non-terminal from $V\setminus \{A,B,C,D\}$ in order to
reconstruct the first stage of the derivation from $G$. The deletion of $AB$
and $CD$ has a direct correspondence to the second stage of $G$. This implies
that $L(\gamma)\subseteq L(G)$.
\end{proof}


\section{Conclusions}
In this article we have introduced the mechanism of a matrix control to the
operations of insertion and deletion.  We investigated the case of systems with
insertion and deletion rules of size $(1,1,0;1,1,0)$, $(1,1,0;1,0,1)$,
$(1,1,0;2,0,0)$ and $(2,0,0;1,1,0)$ and we have shown that the corresponding
matrix insertion-deletion systems are computationally complete. In the case of
first two systems matrices of size 3 are used, while in the case of the last
two systems binary matrices are sufficient. Since a matrix control is a
particular case of a graph control (having an input/output node and series of
linear paths starting and ending in this node), we obtain~\cite{KRV11} that
matrix insertion-deletion systems having rules of size $(2,0,0;2,0,0)$ are not
computationally complete.

We remark that our results for matrix insertion-deletion systems are different
from the results on graph-controlled systems obtained in~\cite{FKRV10} and
previous works. In the graph-controlled case, the total number of nodes in the
graph is minimized, while in the matrix case the depth of the graph
(corresponding to the size of matrices) is minimized.

We did not succeed to show the computational completeness of systems of size
$(1,1,0;1,1,0)$ and $(1,1,0;1,0,1)$ having binary matrices. This gives an
interesting topic for the further research.

\bibliographystyle{abbrv}
\bibliography{matinsdel}

\end{document}